\theoremstyle{plain}
\newtheorem{theorem}{Theorem}[section]
\theoremstyle{definition}
\newtheorem{definition}[theorem]{Definition}
\newtheorem{example}[theorem]{Example}
\theoremstyle{remark}
\begin{document}

\articletype{ARTICLE TEMPLATE}

\title{Multi-output, multi-level, multi-gate design using non-linear programming}
%

\author{
\name{A.~C. Dimopoulos\textsuperscript{a}\thanks{CONTACT A.~C. Dimopoulos. Email: adimopoulos@hna.gr}, C.~Pavlatos\textsuperscript{b}  and G. Papakonstantinou\textsuperscript{c}}
\affil{\textsuperscript{a}Hellenic Naval Academy, Piraeus, Greece; \textsuperscript{b}Hellenic Air Force Academy, Dekelia Air base, Greece; \textsuperscript{c}National Technical University of Athens, Athens,Greece }
}

\maketitle

\footnotetext{A. C. Dimopoulos and C. Pavlatos have equally contributed to this work}
\begin{abstract}
 Using logic gates is the traditional way of designing logic circuits. However, most of the minimization algorithms concern a limited set of gates (complete sets), like sum of products,  exclusive-or sum of products, NAND gates, NOR gates e.t.c.. In this paper, a method is proposed for minimizing multi-output Boolean functions using any kind of two-input gates although it can easily be extended to multi-input gates. The method is based on non-linear mixed integer programming. The experimental results show that the method gives the same or better results compared to other methods available in the literature. However, other methods do not ensure that they produce the minimal solution, while the main advantages of the proposed method are that it does guarantee minimality and it can also handle Boolean functions for incompletely specified functions. The method is general enough and can easily be extended to more complicated design modules than just basic gates.
\end{abstract}

\begin{keywords}
Boolean functions;  minimization; incompletely specified functions; non-linear integer optimization; logic circuits
\end{keywords}

\section{Introduction}
The conventional method of designing a logic circuit is via using logic gates. Nonetheless, the vast majority of the minimization algorithms involve a limited set of gates (complete sets), e.g. sum of products (SOPs),  Exclusive-Or-Sum-Of-Products (ESOPs), NAND gates, NOR gates etc. In the literature there is a very limited number of publications that permit the combination of any kind of gates, such as Refs.\cite{sasao}, \cite{mehdi}, \cite{karak}, \cite{arez}, \cite{Coello}. Moreover, most of them are using genetic algorithms which cannot guarantee minimality. However, in previous approaches that guarantee minimality, presented by our team, methods were proposed that can also handle incompletely specified Boolean functions by using as modules Multiplexers (MUXs) or Reed–Muller universal blocks (RMs) \cite{NonLinear} and NAND gates \cite{a1}. Nevertheless, the methodology presented in references \cite{NonLinear,a1} was restricted to specific gates. In this paper a method is proposed for minimizing multi-output Boolean functions using any kind of two-input gates (Fig. \ref{fig:Ex1}), although it can easily be extended to multi-input gates. The method allows the user to select the kind of gates to be used e.g. AND, OR, NAND, NOR, XOR, NOT, etc., as well as the architecture of the desired circuit.
\begin{figure}[!t]
\caption{General form of a gate}	
	\centering
		\includegraphics [scale=0.80] {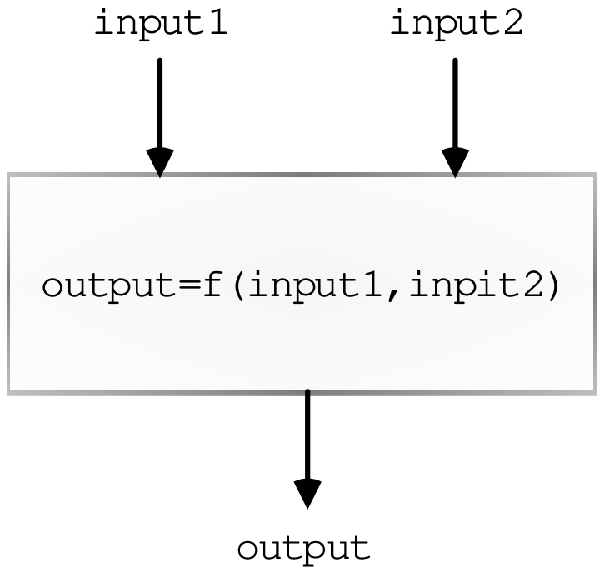}  
	
	\label{fig:Ex1}
\end{figure}
This architecture can have for example the structure of an $p \times q$ grid of gates for at most $q$  functions (multi-output function) or a tree structure (Fig. \ref{fig:Ex2}) for a single-output function. The inputs of each gate can be connected to the outputs of the gates at any previous level of gates in the grid or they can be variables or constants. \\The method is based on non-linear mixed integer optimization methods, with the goal of minimizing the number of gates used in the structure or the number of transistors within the circuits. For this goal, any suitable optimizer can be utilized. In our case, we choose to run all the experimental examples on the free access NEOS server \cite{neos1,neos2,neos3,neos4} and largely the BARON solver \cite{baron}. BARON implements deterministic global algorithms of the branch-and-bound type, that guarantee global optima under fairly general assumptions. These assumptions are fulfilled in our case. Therefore we can locate an exact solution, which provided the minimum number of gates. For facilitating the creation of the required by the NEOS server GAMS nomenclature, a FORTRAN program has been implemented for the automatic translation of the problem description to GAMS. The experimental results show that the method gives the same or better results compared to other methods available in the literature. However, other methods do not ensure that they produce the minimal solution, while the main advantages of the proposed method are that it does guarantee minimality and it can also handle Boolean functions for incompletely specified functions. The method is general enough and can easily be extended to basic design modules more complicated than gates.\\
The remainder of the paper is organized as follows. In Section \ref{sec:basicd} some definitions and preliminaries are provided, to satisfy the purpose of a self-contained paper and for reader's ease. In Section \ref{sec:trans} the basic idea is described,  while in Section \ref{sec:illustr} an illustrative example is given. In Section \ref{sec:donotcare} the case of incompletely specified functions is described. Experimental results are presented for several numerical examples in Section \ref{sec:implement}. Finally, Section \ref{sec:conclusions} concludes this work.

\section{Preliminaries}\label{sec:basicd}
In this section, important concepts and definitions from the area of Boolean algebra are presented, in order to build up a necessary background for the rest of the paper.    
\begin{definition} 
A Boolean function $f$ is a mapping $f:{{\left\{ 0,1 \right\}}^{n}}\to \left\{ 0,1 \right\}$.
\end{definition}
\begin{definition} 
Let $x$ be a variable that takes a value from $V=\{0,1\}$ and $S\subseteq V$. Then $x^S$  is a literal of $x$, such that $x^S=1$ when $x\in S$ and $x^S=0$ when $x\in V\char92 S$. When $S=V$ then $x^S=1$. 
\end{definition} \noindent
A common notation denotes $x$ for $x^{\left\{1\right\}}$, $\bar{x}$ for $x^{\left\{0\right\}}$ and 1 for $x^{\left\{0,1\right\}}$.
\begin{definition} 
If $\dot{x}_j$, with $1\leq j \leq n$, is a literal of the variable $x_j$ then the expression $C=\dot{x}_1\dot{x}_2$...$\dot{x}_n$ is a product term or cube. When $\dot{x}_j=x_j$ or $\bar{x}_j$ (excluding $\dot{x}_j=0,1$), then $C$ is called minterm and we denote $\dot{x}_j$ as $\ddot{x}_j$ for all $j$'s.
\end{definition}\noindent
Overall, there exist $2^n$ minterms.
\begin{definition}
If we replace  each $x_j$ with 1 and each $\bar{x}_j$  with 0 in a minterm, we form a binary number $g$ that represents the specific minterm, which is called representative number. 
\end{definition}
\begin{definition}
For each minterm with representative number $g$ and for each variable $j$ in the representation of the minterm, we define the representative bit $b_{g,j}$, $1\leq j \leq n$, as:
\begin{itemize}
\item $b_{g,j}=1$ if the variable is in its normal form i.e. $x_j$
\item $b_{g,j}=0$ if the variable is in its negated form i.e. $\bar{x}_j$
\end{itemize}
\end{definition}\noindent
Each Boolean function $f$ can be uniquely represented as the Boolean sum of all minterms for which $f(\ddot{x}_1,\ddot{x}_2,...,\ddot{x}_n)=1$ (minterm expression of the function).
\begin{example}
Let us consider the minterm $x_1\bar{x_2}x_3\bar{x}_4$. Its representative number is $1010_{(2)}$ in binary form or $10_{(10)}$ in decimal form. Hence, this is the 10\textsuperscript{th} minterm i.e. $g=10$. Its representative bits are:

$b_{12,1}=1$

$b_{12,2}=0$

$b_{12,3}=1$

$b_{12,4}=0$
\end{example}

\begin{definition}\label{def6}  
We say that a product term covers a minterm, if it is 1 when the minterm is 1.
\end{definition} \noindent
This happens when the product term has the constant 1 or the same literal with the minterm for all corresponding variables. It is noted that a missing variable in the product term can be considered as 1.
\begin{definition}
The bitvector representation of a Boolean function of $n$ variables is a $2^n$ bit vector, where the $g$\textsuperscript{th} bit {\normalfont(}$0\leq g \leq 2^n-1${\normalfont)} is 1 if the minterm with representative number $g$ is included in the minterm expression of the function, otherwise the $g$\textsuperscript{th} bit is 0.
\end{definition}
%
\begin{example}
The functions $f_1=x_1x_3$ and $f_2=x_1\bar{x}_2x_3$ can be represented as a Boolean sum of minterms, e.g. $f_1=x_1x_3=x_1(x_2+\bar{x}_2)x_3=x_1x_2x_3+x_1\bar{x}_2x_3$ and $f_2=x_1\bar{x}_2x_3$. The representative numbers of the minterms $x_1x_2x_3$ and $x_1\bar{x}_2x_3$ are 7 and 5, respectively. Hence, the bitvector form of function $f_1$ is $f_1=10100000$ and of function $f_2$ is $f_2=00100000$, with the least significant bit the rightmost one and the most significant bit the leftmost one. We also note that according to definition \ref{def6} the product term $x_1x_3=x_11x_3=$ covers the minterm $x_1\bar{x}_2x_3$ .
\end{example}
\begin{definition}
An Exclusive-or Sum of Product terms (ESOP) is an expression of the form $\oplus\sum_{i=1}^{m}C_i$, where $C_i$ are cubes, that non-uniquely represents a function and $\oplus$ the XOR boolean function. If $C_i$ are minterms, then the expression uniquely represents the function.
\end{definition}\noindent
It is easy to see that when the values \{0,1\} of the variables are given, only one minterm will be 1 and all others will be 0. Hence, it is indifferent if we have Boolean sum or XOR sum of the same minterms and each function $f$ can be also uniquely represented as a XOR sum of its minterms $\ddot{x}_1\ddot{x}_2...\ddot{x}_n$.
\begin{example}
Let us consider the function $f=11001010$ in bitvector form or $f=x_1x_2x_3$$\oplus$$x_1x_2\bar{x}_3\oplus\bar{x}_1x_2x_3\oplus\bar{x}_1\bar{x}_2x_3$ in minterm form. The expression $x_1x_2\oplus\bar{x}_1x_3$  is another ESOP expression of the same function $f$ with only two product terms, which can easily be verified.
\end{example}
\begin{definition} 

Let $f(\textbf{x})$ be a switching
function and \textbf{x} the vector of its variables. Let $x_i$ be
one of the variables in the vector \textbf{x}. Then
$f(x_1,x_2,\ldots,x_i=0,\ldots,x_n)$, $f(x_1,x_2,\ldots,x_i=1,\ldots,x_n)$, \allowbreak 
$\{f(x_1,x_2,\ldots,$\allowbreak $x_i=0,\ldots,x_n) \oplus 
f(x_1,x_2,$\allowbreak $\ldots,x_i=1,\ldots,x_n)\}$
are subfunctions of $f$,
regarding variable $x_i$. For simplicity, in the rest of this
paper, they will be referred as $f^0$, $f^1$ and $f^2$
respectively.
\end{definition}\noindent
A Boolean function $f$ can thus be expressed as:
\begin{equation}
\label{eq:eq70}
\begin{array}{c}
f(\textbf{x})=\bar{x}_{n}f^{0}\oplus{x}_{n}f^{1}=
{x}_{n}f^{2}\oplus{f}^{0}=
\bar{x}_{n}f^{2}\oplus{f}^{1}
\end{array}
\end{equation}
These expressions are called Shannon, Positive Davio and Negative Davio expansions respectively. The Shannon expansion is also known more frequently in the equivalent form:
\begin{equation}
\label{eq:eq71}
\begin{array}{c}
f(\textbf{x})=\bar{x}_{n}f^{0} + {x}_{n}f^{1}
\end{array}
\end{equation}
\begin{theorem}
Every Boolean function can be implemented using $3^n$gates at most, where $n$ is the number of variables.
\end{theorem}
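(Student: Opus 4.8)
The plan is to argue by induction on the number of variables $n$, using the Shannon expansion (\ref{eq:eq71}) as the recursive step. The idea is that each expansion in (\ref{eq:eq70}) writes an $n$-variable function $f$ in terms of two subfunctions $f^0$ and $f^1$ (or $f^0,f^2$, or $f^1,f^2$) that each depend on only $n-1$ variables, together with a fixed, constant-size interconnection of gates realizing the multiplexing $\bar{x}_n f^0 + x_n f^1$. Since the three expansions are interchangeable (recall $f^2=f^0\oplus f^1$), I would fix the Shannon form, which needs the fewest distinct gate types and makes the bookkeeping cleanest.

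For the base case I would take $n=0$ (or $n=1$): a constant function is available as a circuit input and so costs no gates, giving $G(0)=0\le 3^0$, where $G(n)$ denotes the worst-case number of two-input gates needed for an $n$-variable function. For the inductive step, assuming every $(n-1)$-variable function is realizable with at most $3^{n-1}$ gates, I would build $f$ from independent realizations of $f^0$ and $f^1$ and then attach the combining circuit: two AND gates for $\bar{x}_n f^0$ and $x_n f^1$ and one OR gate for their sum, with the literals $x_n,\bar{x}_n$ taken as available inputs. This yields a recurrence of the form $G(n)\le 2\,G(n-1)+c$ with a small constant $c$.

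The final step is to discharge this recurrence against the target $3^n$. Solving $G(n)\le 2\,G(n-1)+c$ gives $G(n)=O(2^n)$, which is comfortably below $3^n$; concretely, with $c=3$ and $G(0)=0$ one obtains $G(n)\le 3\,(2^n-1)\le 3^n$ for all $n\ge 1$, the inequality $3\,(2^n-1)\le 3^n$ being a routine check. I expect the only real subtleties, hence the main obstacle, to lie in the accounting at small $n$, where the bound $3^n$ is tightest: substituting the weak hypothesis $G(n-1)\le 3^{n-1}$ directly into $2\,G(n-1)+c$ does not by itself close the induction at $n=1$, so I would either solve the recurrence explicitly or strengthen the inductive hypothesis to the self-propagating form $G(n)\le 3\,(2^n-1)$. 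A secondary point to pin down is the modelling convention for complementation: if a NOT gate must be counted for $\bar{x}_n$ rather than assumed free, the additive constant $c$ grows and one must verify the first cases $n=1,2$ by hand before the induction takes over.
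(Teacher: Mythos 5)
Your proposal takes essentially the same route as the paper: the paper's proof is exactly the recursive Shannon/Davio expansion applied down a tree of gates until the leaf inputs become the known constants $0$ or $1$. In fact you go further than the paper, which never actually derives the count; your recurrence $G(n)\le 2\,G(n-1)+c$ with solution $G(n)\le 3\,(2^n-1)\le 3^n$ (and the caveats about NOT gates and small $n$, avoidable by using the Positive Davio form $x_nf^2\oplus f^0$) supplies the missing accounting.
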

\begin{proof}
Let us consider the tree topology of	Fig. \ref{fig:Ex2} with gates as nodes. A simple solution would be to start with the given function at the root of the tree, and apply recursively the Shannon expansion or the Positive (Negative) Davio expansion respectively up to the inputs of the leave modules. These inputs will be constants 0 or 1, depending on the minterms of the function, which are known.
\end{proof} \noindent
Obviously the implementation described previously is not optimal. The problem is to find a solution with the least number of different kinds of gates and the smaller number of levels (Fig. \ref{fig:Ex2}). This will result in power efficiency and delay reduction.
\begin{figure}[!t]
	\caption{The gate tree}
	\centering
		\includegraphics [scale=0.55] {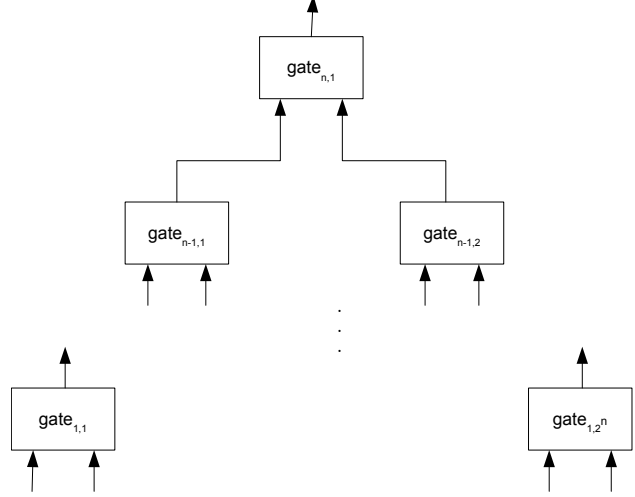}  

	\label{fig:Ex2}
\end{figure}
\section{The basic idea}\label{sec:trans}
In this Section an overview of the basic idea is given, where the input of each gate can either be binary variables $x_i$ of a Boolean function, or binary variables that are the output of previous level gates, or a constant 0 or 1. Each gate's type, input, and output corresponds to an unknown variable to be computed. Hence, for every minterm $g$, the following relations should be written:
\begin{itemize}
\item
Every gate type $t$ is expressed by the following expression\\
$tsel_{i1} \times f_1 + tsel_{i2} \times f_2 + tsel_{i3} \times f_3 + tsel_{i4} \times f_4 + tsel_{i5} \times f_5 + tsel_{i6} \times f_i6 + tsel_{i7} \times f_7$,\\ 
where $tsel_ij$ are binary unknown variables to select a gate type and the expression sum should be 1. The $f_j$s are defined as:

$f_1=AND(x,y) \Rightarrow x \times y$.

$f_2=OR(x,y) \Rightarrow x + y-(x \times y)$.

$f_3=NOT(x) \Rightarrow (1 - x)$.

$f_4=XOR(x,y) \Rightarrow x + y - 2 \times (x \times y)$.

$f_5=NAND(x,y) \Rightarrow (1 - x) + (1-y) - (1 - x) \times (1 - y)) = 1 - x \times y$.

$f_6=NOR(x,y) \Rightarrow (1-x) \times (1 - y)$.

$f_7=CON(x) \Rightarrow x$.\\ 
It is noted that  the above expressions show the obvious equivalency between  Boolean algebra expressions and ordinary algebra expressions. It is noted that the ``gate'' $CON$ is actually a wire connection. Moreover, since each gate should be only of one type the following constrain should hold for the gate selection binary variables $tsel_{ij}$

$tsel_{i1} + tsel_{i2} + tsel_{i3} + tsel_{i4} + tsel_{i5} + tsel_{i6} + tsel_{i7} = 1$. 
\item
The two inputs of each gate are fed by one of a set of possible entities $T_j$. These entities can be outputs of gates of the previous level, the variables of the function or the constants 0, 1. For each input $inp_i$ we write a relation of the form:

$inp_i=inpsel_{i1} \times T_1 + inpsel_{i2} \times T_2 +inpsel_{i3} \times T_3 + \ldots$\\
where $inpsel_{ij}$ is a binary selection coefficient corresponding to each entity. It means that the corresponding entity $T_{j}$ will be the only one to feed the input $inp_i$ if $inpsel_{ij}=1$ or not if $inpsel_{ij}=0$.\\
Since only one entity can feed an input, the following constrain should be used for each of the inputs:

$inpsel_{i1}+inpsel_{i2}+inpsel_{i3}+ \ldots =1$.
\item
The output of each gate, which is a function $f_i$ of its data inputs and its gate type (as described before), is written as $out_i=E_i \times f_i$. The coefficient $E_i$ which is also a binary variable to be computed, signifies that the corresponding gate will be active if its value is 1 or inactive if its value is 0. If it is inactive it means that the corresponding module can be eliminated.\\
Since, the target is to reduce the number of gates used as much as possible, we must minimize the expression:
$E_1+E_2+E_3+ \ldots$. It is noted that the ``gates'' corresponding to the operation CON are not taken into account in the previous sum, since it corresponds to a simple wire connection.
\item
Finally we have to ensure that the output(s) of the circuit (the output(s)  at the top level), will produce the given function(s) $F$. Hence, we have to write the relation(s):\\
\begin{itemize}
    \item $O_n=1$, if the examined minterm is covered by the function $f$ or
    \item $O_n=0$, otherwise
\end{itemize}
\end{itemize}
All the above described relations, for each minterm, constitute the integer non-linear problem to be solved. The next illustrative example will clarify the proposed non-linear integer programming approach.
\section{An illustrative example}\label{sec:illustr}
Let us consider the implementation of the three variable function $f=\sum(0,1,3,$ $5,6)=01101011=6b$ . This function has five minterms with representative numbers 0, 1, 3, 5, 6 and  representative bits:\\
\begin{itemize}
\item []
$b_{0,1}=0$ for minterm $0_{10}=(000)_2$ and variable 1
\item []
$b_{0,2}=0$ for minterm $0_{10}=(000)_2$ and variable 2 
\item []
$b_{0,3}=0$ for minterm $0_{10}=(000)_2$ and variable 3
\item []
$b_{1,1}=0$ for minterm $1_{10}=(001)_2$ and variable 1
\item []
$b_{1,2}=0$ for minterm $1_{10}=(001)_2$ and variable 2
\item []
$b_{1,3}=1$ for minterm $1_{10}=(001)_2$ and variable 3
\item []
$b_{2,1}=0$ for minterm $2_{10}=(010)_2$ and variable 1
\item []
$b_{2,2}=1$ for minterm $2_{10}=(010)_2$ and variable 2
\item []
$b_{2,3}=0$ for minterm $2_{10}=(010)_2$ and variable 3
\item [] \dots 
\item [] \dots 
\item [] \dots 
\item []
$b_{7,3}=1$ for minterm $7_{10}=(111)_2$ and variable 3
\end{itemize}
For this example we examine gates that can take as input:
\begin{itemize}
    \item the output of the exactly previous level (not true for the first level)
    \item either constant 0 or 1
    \item the variable of the function
\end{itemize}
These gates can be placed on a grid formation of $3 \times 2$ to implement the given function. Considering the gate $(i,j)$ we can construct the following equations for each of its two inputs $inp1_{i,j,g}$ and $inp2_{i,j,g}$, its output $out_{i,j,g}$ and for each minterm $g$ of the function to be implemented:\\
$inp1_{i,j,g}=inpsel1_{i,j,1} \times out_{i-1,1,g} + inpsel1_{i,j,2} \times out_{i-1,2,g} + inpsel1_{i,j,3} \times b_{g,1} +inpsel1_{i,j,4} \times b_{g,2} +inpsel1_{i,j,5} \times b_{g,3} + inpsel1_{i,j,6} \times 1 + inpsel1_{i,j,7} \times 0$\\
The above expression signifies that gate $(i,j)$ takes as first input either:
\begin{itemize}
    \item one of the outputs of the two gates of the previous level or
    \item one of the three variables, for which the minterm $g$ will have the values $b_{g,1},b_{g,2},b_{g,3}$ respectively or
    \item one of the constants 0 or 1
\end{itemize}
Clearly, for all the $b's$ that are equal to zero in the given function, the corresponding term in the above expression can be eliminated. 
The choice between all entities that will feed the input is made by the selection parameters $inpsel_ {i,j,k}$, hence only one of these can be 1 while all the rest have to be 0. Mathematically this can be expressed by the following relation:\\
$inpsel1_ {i,j,1} + inpsel1_ {i,j,2} + inpsel1_ {i,j,3} + inpsel1_ {i,j,4} + inpsel1_ {i,j,5} + inpsel1_ {i,j,6} + inpsel1_ {i,j,7} = 1$  
\\
It is noted that all the above variables represent binary ones. We can write corresponding equations for the input $inp2_{i,j,g}$\\
$inp2_{i,j,g}=inpsel2_{i,j,1} \times out_{i-1,1,g} + inpsel2_{i,j,2} \times out_{i-1,2,g} + inpsel2_{i,j,3} \times b_{g,1} +inpsel2_{i,j,4} \times b_{g,2} +inpsel2_{i,j,5} \times b_{g,3} + inpsel2_{i,j,6} \times 1 + inpsel2_{i,j,7} \times 0$\\
$inpsel2_ {i,j,1} + inpsel2_ {i,j,2} inpsel2_ {i,j,3} + inpsel2_ {i,j,4} + inpsel2_ {i,j,5} + inpsel2_ {i,j,6} + inpsel2_ {i,j,7} = 1$\\
Obviously, at the first level in the expressions for the inputs the first two terms should be eliminated, since there is no previous level to provide outputs.\\
As far as the output $out_{i,j,g}$, we must ensure that the gate $(i,j)$:
\begin{itemize}
\item will be one of a given set of gate types (described in the previous Section)
 \item can be (potentially) eliminated
\item will produce at the top level of the examined architecture the desired function
\end{itemize}\noindent
Hence, we can write the following equation for each minterm $g$:\\
$out_{i,j,g}= E_{i,j} \times (outsel_{i,j,1} \times (inp1_{i,j,g} \times inp2_{i,j,g}) + outsel_{i,j,2} \times (inp1_{i,j,g} + inp2_{i,j,g} - inp1_{i,j,g} \times inp2_{i,j,g}) + outsel_{i,j,3} \times (1 - inp1_{i,j,g}) + outsel_{i,j,4} \times (inp1_{i,j,g} + inp2_{i,j,g} - 2 \times inp1_{i,j,g} \times inp2_{i,j,g}) + outsel_{i,j,5} \times ((1-inp1_{i,j,g}) + (1-inp2_{i,j,g}) - 2 \times (1-inp1_{i,j,g}) \times (1-inp2_{i,j,g})) + outsel_{i,j,6} \times ((1-inp1_{i,j,g}) \times (1-inp2_{i,j,g})) + outsel_{i,j,7} \times (inp1_{i,j,g}))$\\
The above equation describes that a gate can only be one of the possible functions $f_k$, described in the previous Section, determined by the selection variables $outsel$. It is noted that we can use a desired subset of the above gates, omitting the appropriate lines in the above equation. Moreover, variable $E_{i,j}$ defines if the gate $(i,j)$ will be active ($E_{i,j}=1$) or inactine ($E_{i,j}=0$) in which case the gate can be eliminated. Hence, two more equations should be added for each minterm $g$, in order to ensure the above requirements:\\
$outsel_{i,j,1} + outsel_{i,j,2}  
+ outsel_{i,j,3} + outsel_{i,j,4} +
 + outsel_{i,j,5} + outsel_{i,j,6} + outsel_{i,j,7} =1$\\
%
$Obj=E_{1,1} \times (1-outsel_{1,1,7}) + E_{1,2} \times (1-outsel_{1,2,7}) + E_{2,1} \times (1-outsel_{2,1,7}) + E_{2,2} \times (1-outsel_{2,2,7}) + E_{3,1} \times (1-outsel_{3,1,7}) + E_{3,2} \times (1-outsel_{3,2,7}) )$
\begin{figure}[t]
	\caption{Circuit of function $f=6b$}
	\centering
		\includegraphics [scale=0.55] {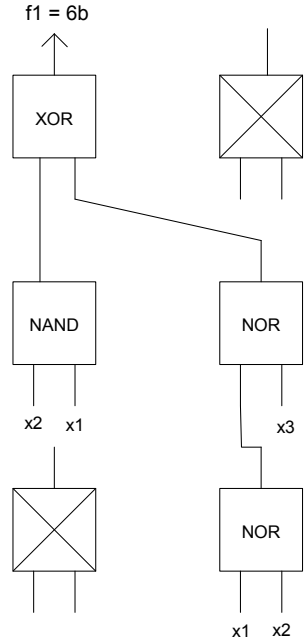}  

	\label{fig:Ex3}
\end{figure}
\\$Obj$ in the above equation is the objective function to be minimized, so that the produced circuit will have the minimum number of gates. $Obj$ is an integer variable and is equal to the sum of all $E_{i,j}$s, which are binary variables indicating that a gate is active.  It is multiplied by $(1-outsel_{i,j,7})$ for each gate $(i,j)$. This is because in case the gate $(i,j)$ is a simple wire connection, it is not counted in the cost of the circuit.\\
Finally, we have to ensure that the produced circuit will have as output the given function. Hence, the following equations should be added for each minterm $g$:
$out_{3,1,g} = 1$ or $out_{3,1,g} = 0$, depending on whether function f covers or not the minterm g.\\
The search for a solution based on all the above equations establish the non-linear problem for the specific illustrative example. The final circuit that results from the that solution is shown in Fig. \ref{fig:Ex3} and consists of four gates. \\
In case we had a two-output function with outputs $6b$ and $2a$, we would have to add the corresponding expressions for $out_{3,2,g}$ i.e. for the top level (output) gate $3,2$. This final circuit is shown in Fig. \ref{fig:Ex4}, which requires five gates.
\begin{figure}[h]
	\caption{Circuit of functions $f_1=6b$ and $f_2=2a$ }
	\centering
		\includegraphics [scale=0.55] {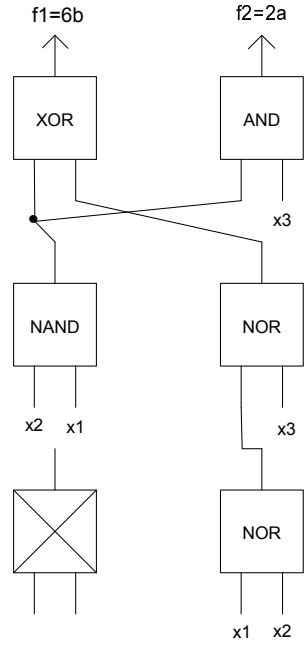}  
	\label{fig:Ex4}
\end{figure} 

\section{Incompletely specified functions}\label{sec:donotcare}
The proposed method can also tackle with the more difficult problem of minimizing expressions while taking into consideration do-not-care conditions. The do-not-care conditions refer to cases where for given minterms we do-not-care if their value in the bitvector form of the function will be 0 or 1. These functions are called incompletely specified functions and one way of describing them is through two disjoint sets:
\begin{itemize}
    \item \textit{on-}set, which includes all the minterms for which the function becomes 1
    \item do-not-care set (\textit{dc-}set), which includes all the do-not-care minterms
\end{itemize}
Those minterms that do not belong to either set, are the ones for which the function becomes 0 and comprise the \textit{off}-set. Representing these sets as functions in their bitvector form results equivalently in the \textit{on-} and \textit{dc-} functions.\\
In these cases with incompletely specified functions, all expressions related to minterms corresponding to the do-not-care ones are skipped.
For example, if the minterms with representative numbers 0 and 1 are do-not-care ones, i.e. $\bar{x_1}\bar{x_2}\bar{x_3}$ and $\bar{x_1}\bar{x_2}{x_3}$, then all equations with $g=0$ or $g=1$ are skipped, e.g. $output_{i,j,1}=1$ and all others with subscript $g=1$ in the previous illustrative example. The solution obtained in this case is shown in Fig.\ref{fig:Ex5} for the case of the two functions $6b$ and $2a$, a circuit requiring three gates.

\begin{figure}[t]

	\caption{Circuit of functions $f_1=6b$ , $f_2=2a$ and DCs }
	\centering
		\includegraphics [scale=0.55] {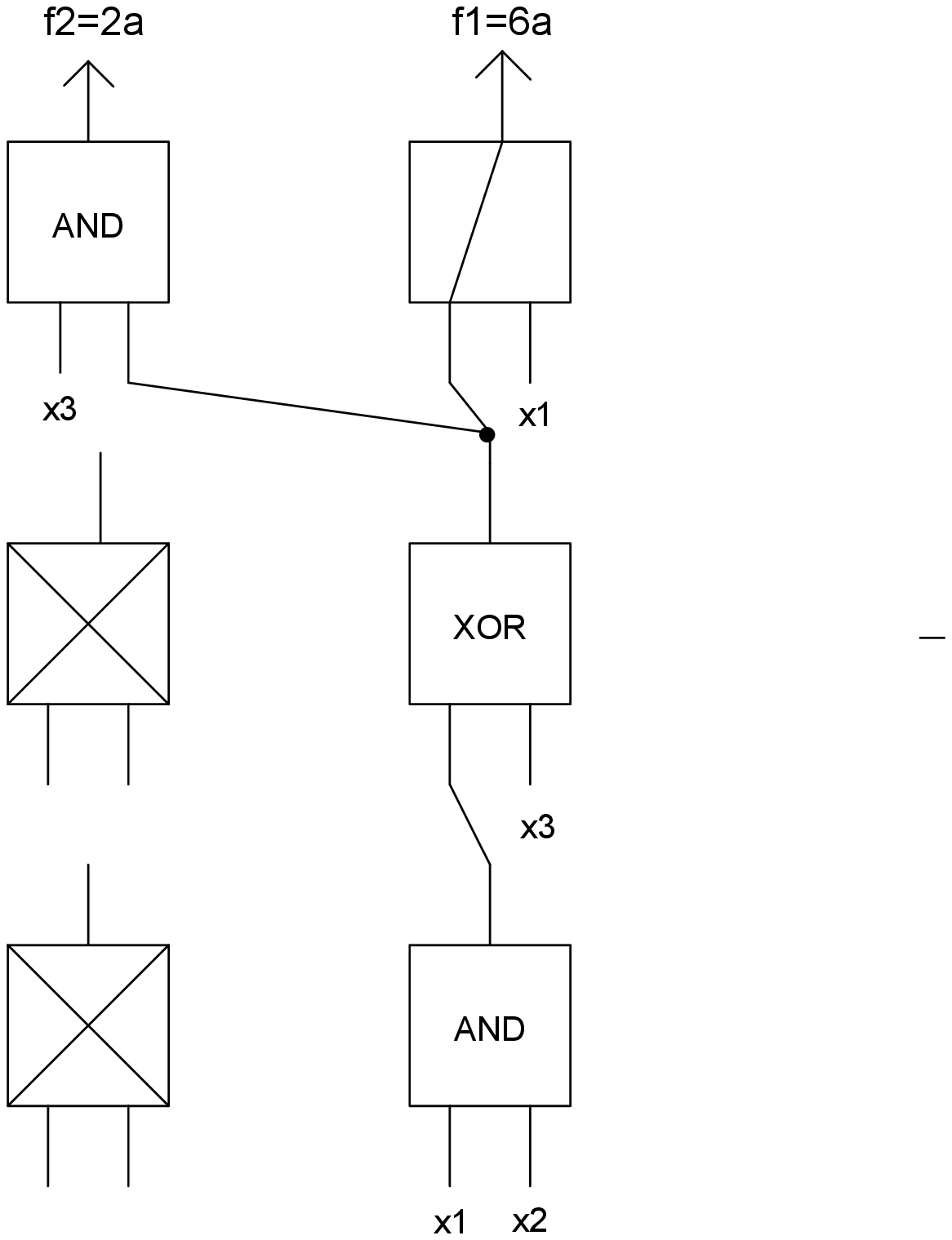}  
	\label{fig:Ex5}
\end{figure}  

\section{Implementation}\label{sec:implement}

All our experiments were executed on the free access NEOS server \cite{neos1,neos2,neos3,neos4}. We mainly used the BARON solver\cite{baron}, which implements deterministic global algorithms of the branch-and-bound type, as a mixed-integer optimal constrained optimizer.
Hence, BARON solutions are guaranteed to provide global optima under fairly general assumptions. These assumptions are fulfilled in our case, hence we can find an exact expression of a function.\\
The free access service of NEOS limits the maximum computing time to 8 hours, which was not enough for  some of our examples to run in full. Hence, for these cases the final solution was not found, but instead the best one found so far. \\
The nomenclature used by the NEOS server is that of the GAMS and AMPL formal languages. In order to ease the production of the required GAMS equivalent programs, a program in FORTRAN was implemented to automatically produce the required GAMS equivalent programs for a given number of variables, number of levels, number of gates at each level, type of allowable gates and the given function to be implemented.\\
We tested all the examples of references \cite{mehdi}, \cite{karak}, \cite{arez} using two-input gates and in all cases the results obtained by our approach were at least the same or even better in some cases. All the results of these comparisons are summarized in Table \ref{tab:1}, where those examples that exhausted the maximum allowed running time of the NEOS system are indicated with a  star character, e.g. example functions $0ee9$, $5a5a$, $936c$. However, even for such cases without the guarantee for optimal solutions, the results were the same or better. For those examples, where no star character is shown, the optimizer terminated within the maximum provided time and hence produced the optimal solution, e.g. example functions $a7f1$, $ab$, $4a6a$. \\
The GAMS program for the example of Fig.\ref{fig:Ex5} is given in the Appendix. As it can be observed it is not a difficult task to transform the non-linear-program  to a GAMS equivalent one.

\begin{center}

\begin{table}[p]
\thisfloatpagestyle{empty} 
\caption{Experimental Results}
\label{tab:1}
{
\begin{tabular}{|p{0.3\textwidth}|p{0.25\textwidth}|p{0.2\textwidth}|p{0.2\textwidth}|}
\hline

Example  &Results&Results&Allowable\\
Function&gates of & gates of &gates in\\
in HEX&Ref$^{x}$& our method&circuit\\
%

\hline \hline
$0ee9$&Ref.\cite{sasao}$\rightarrow$   11&$8^*$&All gates\\
$\sum(0,3,5,6,7,9,10,11)$&example 11.2&&\\
\hline
$a7f1$&Ref.\cite{mehdi}$\rightarrow$   5&$5$&All gates\\
\hline
$5a5a$&&&\\
$936c$&Ref.\cite{mehdi}$\rightarrow$   7&$7^*$&All gates\\
$ec80$&&&\\
\hline
$a0a0$&&&\\
$6ac0$&Ref.\cite{mehdi}$\rightarrow$   7&$7^*$&All gates\\
$4c00$&&&\\
$8000$&&&\\
\hline
&&&AND,OR,\\
$25cb$&Ref.\cite{karak}$\rightarrow$   7&$7^*$&XOR,NOT,\\
&&&CON\\
\hline
&&&AND,OR,\\
$a7f1$&Ref.\cite{karak}$\rightarrow$   7&$6^*$&XOR,NOT,\\
&&&CON\\
\hline
$ab$&Ref.\cite{arez}$\rightarrow$   5&5&NAND\\
$\sum(0,1,3,5,7)$&&&\\
\hline
$69$ &Ref.\cite{arez}$\rightarrow$   13&$12$&NAND\\
$\sum(0,3,5,6)$&&&\\
\hline
$4a6a$&Ref.\cite{arez}$\rightarrow$   9&9&NAND\\
$\sum(1,3,5,6,9,11,14)$&&&\\
\hline
$22d5$&Ref.\cite{arez}$\rightarrow$   9&8&NAND\\
$\sum(0,2,4,6,7,9,13)$&&&\\
\hline
$aaaaaaa8$&Ref.\cite{bara}$\rightarrow$8&5&NAND,\\
$\sum(3,5,7,..29,31)$&&& NOR\\
\hline
$96$&Ref.\cite{bhat}$\rightarrow >$12&$12^*$&NAND\\
$\sum(1,2,4,7)$&&&\\
\hline
$e8$&Ref.\cite{bhat}$\rightarrow>$6&6&NAND\\
$\sum(3,5,6,7)$&&&\\
\hline
$bafc$&Ref.\cite{bhat}$\rightarrow>$7&7&NAND\\
$\sum(2,3,4,5,6,7,9,$&&&\\
$11,12,13,15)$&&&\\
\hline
\end{tabular}
}
\end{table}
\end{center}

\section{Conclusions}\label{sec:conclusions}
In what was shown in the previous Sections, the proposed approach is a non-linear one that can be applied for designing multi-function, multi-level, two-input multi-gates logic circuits. Based on the presented experimental results, this method outperforms other methods available in the literature, while guaranteeing minimality. Moreover, it can tackle with  Boolean functions for incompletely specified functions, and it is flexible in defining the desired architecture to be used. Due to its generality, the method can be extended to use more complicated modules, instead of simple gates and to also support multi-input gates.\\
The presented experimental results of Section \ref{sec:implement} certify that large problems with hundreds or even thousands of unknown variables are manageable computational wise. The later is a great challenge for this type of non-linear integer programming problems, which the proposed method overcomes.\\
Our future endeavour will be to use this method for multi-input gates, as well as other more complicated modules e.g. for Exclusive Or Complex Terms \cite{voud} (ESCTs).

\section*{APPENDIX}
 
\appendix

\end{document}